\documentclass[conference]{IEEEtran}

\usepackage{amsmath}
\usepackage{amsfonts}
\usepackage{amssymb}
\usepackage{amsbsy}
\usepackage{graphicx}
\usepackage{times}
\usepackage{default}
\usepackage{color}
\usepackage[noadjust]{cite}
\usepackage{enumerate}
\usepackage{tikz}
\usepackage{subcaption}

\providecommand{\lk}{\langle}
\providecommand{\rk}{\rangle}

\newcommand{\snr}{\mathsf{snr}}
\newcommand{\dof}{\mathsf{dof}}
\newcommand{\Dof}{\mathsf{DoF}}
\newcommand{\udim}{\overline{d}}
\newcommand{\ldim}{\underline{d}}

\renewcommand{\tilde}{\widetilde}


\def\ba#1\ea{\begin{align*}#1\end{align*}}	
\def\ban#1\ean{\begin{align}#1\end{align}}	
\def\bac#1\eac{\vspace{\abovedisplayskip}{\par\centering$\begin{aligned}#1\end{aligned}$\par}\addvspace{\belowdisplayskip}}	

\newcommand{\lefto}{\mathopen{}\left}

\newtheorem{theorem}{Theorem}

\newtheorem{definition}{Definition}

\newtheorem{remark}{Remark}

\IEEEoverridecommandlockouts 

\linespread{.974}

\sloppy

\title{Explicit and Almost Sure Conditions for\\ $K/2$ Degrees of Freedom}

\author{
\IEEEauthorblockN{David Stotz and  Helmut B\"olcskei}\vspace{.3cm}
\IEEEauthorblockA{
       Dept.~IT \& EE, ETH Zurich, Switzerland\\
         {Email: \{dstotz,\ boelcskei\}@nari.ee.ethz.ch} 
\thanks{The authors would like to thank M.~Einsiedler for helpful discussions and for drawing their attention to \cite{Hoc12}.}
}}

\begin{document}
\maketitle
\begin{abstract}
 It is well known that in $K$-user constant single-antenna interference channels $K/2$ degrees of freedom (DoF) can be achieved for almost all channel matrices. Explicit conditions on the channel matrix to admit $K/2$ DoF are, however, not available. The purpose of this paper is to  identify such explicit conditions, which  are satisfied for almost all channel matrices.  We also provide a construction of  corresponding asymptotically DoF-optimal input distributions. The main technical tool used is a recent breakthrough result by Hochman in fractal geometry \cite{Hoc12}.  

\end{abstract}

\section{Introduction}
Characterizing the degrees of freedom (DoF) in interference channels (ICs) under various assumptions on the channel matrix has become a heavily researched topic in recent years \cite{Jaf11}. 
A particularly surprising result states that $K/2$ DoF can be achieved in single-antenna $K$-user ICs with constant channel matrix \cite{MGMK09}. This statement was shown to hold for (Lebesgue) almost all channel matrices \cite[Thm.~1]{MGMK09}. The  technical arguments---from Diophantine approximation theory---used in the proof of \cite[Thm.~1]{MGMK09} do not seem to allow an explicit characterization of the set of these channel matrices.
 What is known, though, is that channel matrices with all entries rational admit strictly less than $K/2$ DoF \cite{EO09} and hence belong to the set of exceptions relative to the ``almost-all result'' in \cite{MGMK09}.


Recently, Wu et al. \cite{WSV13} developed a general framework, based on (R\'enyi's) information dimension, for characterizing the DoF in constant single-antenna ICs. While this general and elegant theory allows to recover, inter alia, the ``almost-all result'' from  \cite{MGMK09}, it does not provide insights into the structure of the set of channel matrices admitting $K/2$ DoF.




\paragraph*{Contributions}
The main contribution of this paper is to complement the results in \cite{MGMK09, EO09, WSV13} by providing \emph{explicit and almost surely satisfied} conditions on the channel matrix to admit $K/2$ DoF. The conditions we find essentially require that the set of all monomial expressions in the channel coefficients  be linearly independent over the rational numbers. The proof of our main theorem employs a recent breakthrough result from fractal geometry  \cite{Hoc12}, which allows us to compute the information dimension of self-similar distributions under much milder conditions than the previously required open set condition \cite{BHR05}. For the channels satisfying our explicit and almost sure conditions, we furthermore provide a construction of asymptotically DoF-optimal input distributions.


Finally, we show that the explicit sufficient conditions for  $K/2$ DoF we identify are not necessary. This is accomplished by constructing examples of channel matrices that admit $K/2$ DoF but do not satisfy our conditions. The set of all such channel matrices, however, necessarily has Lebesgue measure zero.  



\paragraph*{Notation}
Random variables are represented by uppercase letters from the end of the alphabet. Lowercase letters are used  exclusively for deterministic quantities. Boldface uppercase letters  indicate matrices.  Sets are denoted by uppercase calligraphic letters. For $x\in\mathbb R$, we write $\lfloor x \rfloor$ for the largest integer not exceeding $x$. All logarithms are taken to the  base $2$. $\mathbb E[\cdot ]$ denotes the expectation operator.  $H(\cdot)$ stands for entropy and $h(\cdot)$  for differential entropy.

\section{System model}
We consider a constant single-antenna $K$-user IC with channel matrix $\mathbf H =(h_{ij})_{1\leqslant i,j \leqslant K}\in\mathbb R^{K\times K}$, and input-output relation \begin{align}	Y_i={\sqrt{\snr}}\sum_{j=1}^K h_{ij} X_j +  Z_i,	\quad i=1, ... ,K ,	\label{eq:channel} \end{align}
where $X_i\in\mathbb R$ is the input at the $i$-th transmitter, $Y_i\in \mathbb R$ is the output at the $i$-th receiver, and $Z_i\in \mathbb R$ is  noise of absolutely continuous distribution with $h(Z_i)>-\infty$ and $H(\lfloor Z_i \rfloor)<\infty$. The input signals at different transmitters are independent and  noise is i.i.d.\ across  users and channel uses.

The channel matrix $\mathbf H$ is assumed to be  known perfectly at all transmitters and receivers. We impose the average power constraint
\begin{align*}	\frac{1}{n}\sum_{k=1}^n \left (x_{i}^{(k)}\right )^2\leqslant 1	
\end{align*}
on codewords $\left (x_{i}^{(1)} \,  ... \;\,  x_{i}^{(n)}\right )$ of block-length $n$ transmitted by user $i=1,...,K$.
The DoF  are defined as
\begin{align}	\Dof (\mathbf H) := \limsup_{\snr\to\infty}\frac{\overline C(\mathbf H; \snr)}{\frac{1}{2}\log \snr},	\label{eq:defdof1}	\end{align}
where $\overline C(\mathbf H; \snr)$ is the sum-capacity of the channel \eqref{eq:channel}.

\section{Main result}

We denote the vector consisting of the off-diagonal coefficients of $\mathbf H$ by $\mathbf  {\check h}\in \mathbb R^{K(K-1)}$, and  let $f_1,f_2, ...$ be the monomials of all degrees\footnote{The ``degree'' of a monomial is to be understood as the sum of all exponents of the variables involved (sometimes called the total degree).} in $K(K-1)$ variables enumerated as follows:   $f_1,...,f_{\varphi(d)}$  are  the monomials of degree  not larger than $d$, where  
\ba {\varphi(d)}:=\binom{K(K-1)+d}{d} .\ea


The main result of this paper is the following theorem.

\begin{theorem}\label{thm:explicit}
Suppose that the channel matrix $\mathbf H$ satisfies the following condition:
\begin{center}	For each $i=1,...,K$, the set \ba \{f_j(	\mathbf  {\check h}):j\geqslant 1  \}\cup \{h_{ii} f_j(	\mathbf  {\check h}):j\geqslant 1  \}\tag{$*$}\ea is linearly independent over $\mathbb Q$.	\end{center}
Then, we have \ba \Dof(\mathbf H) = K/2 . \ea
\end{theorem}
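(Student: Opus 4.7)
The plan is to establish the lower bound $\Dof(\mathbf H)\ge K/2$ (the matching upper bound is standard) by combining the information-dimension achievability result of \cite{WSV13} with Hochman's theorem \cite{Hoc12}. The latter supplies the information dimension of certain self-similar input distributions, and condition $(*)$ is used precisely to guarantee the hypotheses of Hochman's theorem.

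I would construct the input of user $j$ as the self-similar distribution
\ba
X_j \;=\; c_\rho \sum_{k=1}^{\varphi(d)} f_k(\mathbf{\check h}) \sum_{\ell=0}^{\infty} A_{j,k,\ell}\,\rho^\ell ,
\ea
for a truncation degree $d$, contraction $\rho\in(0,1)$, and i.i.d.\ discrete digits $A_{j,k,\ell}$. Because $h_{ij}$ for $j\neq i$ is itself a coordinate of $\mathbf{\check h}$, every product $h_{ij}f_k(\mathbf{\check h})$ is again one of the monomials $f_1,f_2,\ldots$. Consequently, the interference $U_i := \sum_{j\neq i}h_{ij}X_j$ and the received signal $V_i:=h_{ii}X_i+U_i$ at each receiver are themselves self-similar with contraction $\rho$, and their generating iterated function systems (IFSs) have translations lying in the $\mathbb Q$-span of $\{f_k(\mathbf{\check h})\}$ and of $\{f_k(\mathbf{\check h})\}\cup\{h_{ii}f_k(\mathbf{\check h})\}$, respectively. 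Subject to exponential separation, Hochman's theorem then gives $d(U_i)$ and $d(V_i)$ in closed form as $\min(1,H/|\log\rho|)$, with $H$ the per-scale digit entropy of the corresponding measure; plugging these into the \cite{WSV13} DoF bound and optimizing over $(\rho,d)$, with $d\to\infty$, is expected to deliver the target value $K/2$.

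The main obstacle is showing that condition $(*)$ forces Hochman's exponential-separation condition for the IFSs generating $V_i$ and $U_i$. For two distinct $n$-fold compositions of maps in the $V_i$-IFS, the difference of fixed points is an integer linear combination of the monomials appearing in $(*)$, the coefficients being sums $\sum_{t<n}(a_t-a'_t)\rho^t$ of bounded digit differences and hence integer polynomials in $\rho$ of degree less than $n$. Condition $(*)$ asserts precisely that such a combination vanishes only when each coefficient vanishes, which forces the two compositions to agree and yields qualitative non-collision. Upgrading this to the quantitative bound $\ge \rho^{Cn}$ Hochman requires combines the polynomial control on the coefficients with a height/Liouville-type or pigeonhole argument, carried out simultaneously for the IFSs of both $V_i$ and $U_i$; this is the technical heart of the proof. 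The appearance of monomials of \emph{all} degrees in $(*)$ reflects the need to take $d\to\infty$ in the input construction, so that the IFS translations remain $\mathbb Q$-independent no matter how many monomials are introduced.
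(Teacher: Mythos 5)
Your construction and overall framework match the paper: self-similar inputs whose digits are $\mathbb Z$-linear combinations of the monomials $f_k(\mathbf{\check h})$, the information-dimension lower bound on DoF, and Hochman's result to evaluate the information dimensions. You also correctly identify that Condition ($*$) yields the qualitative non-collision (unique representation) of the digit alphabets. However, the step you call the ``technical heart'' --- upgrading qualitative non-collision to Hochman's exponential-separation bound $\geqslant \rho^{Cn}$ at a \emph{fixed} contraction $\rho$ via a height/Liouville-type argument --- is a genuine gap, and in fact the wrong route. Condition ($*$) is purely qualitative ($\mathbb Q$-linear independence) and cannot deliver any quantitative lower bound on $|\Delta_{\mathbf i,\mathbf j}(\rho)|$: there are channel matrices satisfying ($*$) whose monomial combinations are Liouville-like, so exponential separation at the specific $\rho$ you would want can fail. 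Chasing such a quantitative Diophantine estimate is precisely what forces the non-explicit conditions of \cite{WSV13} that the paper is designed to avoid.

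The paper instead invokes the \emph{parametrized} form of Hochman's theorem \cite[Thm.~1.8]{Hoc12}, treating the contraction ratio $r$ as the parameter over a compact interval $I_\varepsilon=[r_N-\varepsilon,r_N]$. The only hypothesis needed is that $\Delta_{\mathbf i,\mathbf j}(r)=\sum_k r^{k-1}(w_{i_k}-w_{j_k})$ does not vanish identically on $I_\varepsilon$ for $\mathbf i\neq\mathbf j$, which follows from the distinctness of the translations and the identity theorem for power series --- no quantitative separation is required. The conclusion is that the dimension formula holds for all $r$ outside a set of Hausdorff dimension zero, so one can pick $\tilde r\in I_\varepsilon$ arbitrarily close to the target $r_N=|\mathcal W_N|^{-2}$ and pass to the limit $\varepsilon\to 0$ by continuity. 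You should also make explicit where ($*$) actually does its quantitative work: it gives the exact cardinalities $|\mathcal W_N|=N^{\varphi(d)}$ and $|\sum_{j\neq i}h_{ij}\mathcal W_N|=((K-1)N)^{\varphi(d+1)}$, and the bijection between $\bigl(h_{ii}W_{i,0},\sum_{j\neq i}h_{ij}W_{j,0}\bigr)$ and their sum, whence $H(\sum_j h_{ij}W_{j,0})=H(h_{ii}W_{i,0})+H(\sum_{j\neq i}h_{ij}W_{j,0})$; this entropy doubling, combined with the choice of $r_N$, is what produces the per-user contribution $1-\tfrac12$ and hence $K/2$ as $d,N\to\infty$.
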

\begin{proof} 	See Section~\ref{sec:proof}.	\end{proof}
The proof of Theorem~\ref{thm:explicit} is constructive in the sense of providing an explicit construction of a sequence of input distributions that asymptotically achieves $\Dof(\mathbf H) =K/2$ for all $\mathbf H$ satisfying Condition ($*$).

Note  that the prominent example from \cite{EO09} with all entries of $\mathbf H$ rational, shown in \cite{EO09} to admit strictly less than $K/2$ DoF, does not satisfy Condition ($*$), since two rationals are always linearly dependent over $\mathbb Q$.


To see that Condition ($*$) is satisfied for (Lebesgue) almost all channel matrices, we first note that for  fixed $d\in \mathbb N$, fixed $a_1,...,a_{\varphi(d)}, b_1, ... , b_{\varphi(d)}\in \mathbb Z$, and fixed $i\in \{1,...,K\}$, 
\ban 	\sum_{j=1}^{\varphi(d)} a_jf_j(\mathbf  {\check h}) +\sum_{j=1}^{\varphi(d)} b_jh_{ii}f_j(\mathbf  {\check h}) =0 \label{eq:countable}\ean
is satisfied only on a set of measure $0$ with respect to the choice of $\mathbf H$. It suffices to consider linear combinations with coefficients in $\mathbb Z$ as \eqref{eq:countable} with rational coefficients can be multiplied by a  common denominator.
Since the set of equations \eqref{eq:countable} is countable with respect to  $d\in \mathbb N$, $a_1,...,a_{\varphi(d)}, b_1, ... , b_{\varphi(d)}\in \mathbb Z$, and $i\in\{1,...,K\}$, it follows that  Condition ($*$) is satisfied for almost all channel matrices $\mathbf H$.
Theorem~\ref{thm:explicit} therefore provides (Lebesgue) almost surely satisfied explicit conditions for $\mathbf H$ to admit $K/2$ DoF.

We proceed by developing, in Sections~\ref{sec:infdof} and \ref{sec:ifsinf}, preparatory material needed for the proof of Theorem~\ref{thm:explicit}.


\section{Information dimension and DoF}
\label{sec:infdof}

\begin{definition}\label{def:infdim} Let $X$ be a random variable with distribution $\mu$. We define the \emph{lower} and \emph{upper information dimension} of $X$ as
\begin{align*} \ldim(X) := \liminf_{k\to\infty}\frac{H(\lk X \rk _k)}{\log k} \;\: \text{and} \;\: \udim(X):= \limsup_{k\to\infty}\frac{H(\lk X \rk _k)}{\log k},  \end{align*}
where $\lk 
X \rk_k := \lfloor kX\rfloor /k$. If  $\ldim(X) = \udim(X)$, then we set $d(X):= \underline d(X) = \overline d(X)$ and  call $d(X)$ \emph{the information dimension} of $X$. Since $\ldim(X), \udim(X),$ and $d(X)$ depend on $\mu$ only, we sometimes also write $\ldim(\mu), \udim(\mu),$ and $d(\mu)$, respectively. 
\end{definition}

The relevance of information dimension in characterizing DoF stems from the following relation  \cite{SB12Allerton,GS07, WSV13}
\ban \limsup_{\snr\to\infty}\frac{h(\sqrt{\snr} X+Z)}{\frac{1}{2}\log\snr}=\udim(X), \label{eq:guio} \ean
which holds for arbitrary independent random variables $X$ and $Z$, with the distribution of $Z$  absolutely continuous and such that  $h(Z)>-\infty$, $H(\lfloor Z\rfloor )<\infty$. 

We can apply \eqref{eq:guio} to ICs as follows. By standard random coding arguments it follows for the IC \eqref{eq:channel}  that the sum-rate 
\ban I(X_1;Y_1) + \ldots +I(X_K;Y_K) \label{eq:sumrate} \ean
is achievable, where $X_1,..., X_K$ are independent input distributions with $\mathbb E[ X_i^2] \leqslant 1$, $i=1,...,K$. Using the chain rule, we get
\ban &I(X_i;Y_i) \label{eq:5} = h\lefto (Y_i \right ) - h\lefto (Y_i  \; \!\vert \;\! X_i\right ) \\ & \!\!\!\! = \! h\Bigg ({\sqrt{\snr}}\sum_{j=1}^K h_{ij} X_j +  Z_i \Bigg)\! - \! h\Bigg ({\sqrt{\snr}}\sum_{j\neq i}^K h_{ij} X_j +  Z_i \Bigg ) \label{eq:7}\ean
for $i=1,...,K$. Combining \eqref{eq:guio}-\eqref{eq:7}, we obtain
\begin{align} 	\dof(X_1, ... ,X_K ; \mathbf H) &:=  \nonumber \\&\!\!\!\!\! \sum_{i=1}^K \left [d\Bigg (\sum_{j=1}^K h_{ij} X_j \Bigg )-	 d \Bigg( \sum_{j\neq i}^K  h_{ij} X_j \Bigg)\right ]\label{eq:defdof} \\&\leqslant \Dof(\mathbf H), \label{eq:ineq1}   	 \end{align}
for all independent $X_1,...,X_K$ with\footnote{\label{fn:1}We only need the conditions $\mathbb E[ X_i^2] <\infty$ as scaling of the inputs does not affect $\dof(X_1, ... ,X_K ; \mathbf H) $.}  $\mathbb E[X_i^2]<\infty$, $i=1,...,K$, and such that all  information dimension terms appearing in \eqref{eq:defdof} exist. Relation \eqref{eq:ineq1} was first reported in \cite{WSV13}. 
A striking result in \cite{WSV13} shows that input distributions of discrete, continuous, or mixed discrete-continuous nature can achieve no more than $1$ DoF. For $K>2$, DoF-optimal input distributions therefore necessarily have a singular component. 


\section{Iterated function systems}
\label{sec:ifsinf}

A  class of singular distributions  with explicit expressions for their information dimension are self-similar distributions defined as follows.
Consider a finite set $\Phi_r:=\{\varphi_{i,r} : i=1,..., n\}$ consisting of  affine contractions of $\mathbb R$, i.e., \ban \varphi_{i,r}(x)=rx+w_i,\label{eq:ifs}\ean where $r\in I\subseteq (0,1)$ and the $w_i$ are pairwise different real numbers. We furthermore let $\mathcal W:=\{w_1,... ,w_n\}$. $\Phi_r$ is called an iterated function system (IFS)  parametrized by the contraction parameter $r\in I$. 
By classical fractal geometry \cite[Ch.~9]{Fal04} every IFS has an associated unique attractor, i.e., a non-empty compact set  $\mathcal A\subseteq \mathbb R$ such that \ba \mathcal A=\bigcup_{i=1}^n \varphi_{i,r}(\mathcal A) .\ea 
Moreover, for each probability vector $(p_1,...,p_n)$, there is a unique (Borel) probability distribution $\mu_r$ on $\mathbb R$ 
such that 
\ban \mu_r= \sum_{i=1}^n p_i (\varphi_{i,r})_\ast\mu_r , \label{eq:dist}\ean
where $(\varphi_{i,r})_\ast\mu_r$ denotes the push-forward of $\mu_r$ by $\varphi_{i,r}$. The distribution $\mu_r$ is supported on $\mathcal A$ and is called the self-similar distribution corresponding to the IFS $\Phi_r$ with underlying probability vector $(p_1,...,p_n)$. We can give the following explicit expression for a random variable $X$ with distribution $\mu_r$ in \eqref{eq:dist}
\ban X = \sum_{k=0}^\infty r^k W_k,		\label{eq:representation} \ean
where $\{W_k\}_{k\geqslant 0}$ is a set of  i.i.d.\ copies of a random variable $W$ drawn from the set $\mathcal W$ according to  $(p_1,...,p_n)$.

\section{The main ideas}

Classical results in  fractal geometry allow an analytical expression for the information dimension of a self-similar distribution under the so-called open set condition  \cite[Thm.~2]{GH89}. This condition requires the existence of a non-empty open set whose images under the elements of $\Phi_r$ do not overlap and all lie in the open set  (see, e.g., \cite{BHR05}). Wu et al.\ \cite{WSV13} ensure that the open set condition is satisfied by imposing an upper bound on the contraction parameter $r$ according to 
\ban r\leqslant \frac{\mathsf m (\mathcal W)}{\mathsf m (\mathcal W)+\mathsf M (\mathcal W)}. \label{eq:suff} \ean
Here, $\mathsf m (\mathcal W):=\min_{i\neq j} |w_i-w_j|$ and $\mathsf M (\mathcal W):=\max_{i,j} |w_i-w_j|$. 
The authors of \cite{WSV13} construct $K/2$ DoF-achieving input distributions by building $\mathcal W$
 from $\mathbb Z$-linear combinations of monomial expressions in the off-diagonal channel coefficients; this idea inspired our Condition ($*$).  However, to ensure that the construction in \cite{WSV13}  meets \eqref{eq:suff}  the minimum and maximum distance of the elements in $\mathcal W$ needs to be controlled. This results in the question of how fast a polynomial in real variables with integer coefficients approaches an integer \cite{WSV13}, a problem  studied in Diophantine approximation theory. The nature of the results applied in \cite{WSV13} to deal with this question does not seem to allow  an explicit characterization of the channel matrices that admit $K/2$ DoF.
Recent groundbreaking work by Hochman \cite{Hoc12} replaces the open set condition by a much weaker condition, which essentially requires that the IFS must not  allow ``exact overlap''. This improvement turns out to be instrumental in the proof of Theorem~\ref{thm:explicit}. 
Specifically, we use the following simple consequence of a key result by Hochman \cite[Thm.~1.8]{Hoc12}.


\begin{theorem} \label{thm:hochman}
If $I\subseteq (0,1)$ is a non-empty compact interval, and $\mu_r$ is the self-similar distribution with underlying contraction parameter $r\in I$  and  probability vector $(p_1,...,p_n)$, then\footnote{The ``$1$'' in the minimum simply accounts for the fact that information dimension cannot exceed the dimension of the ambient space.} 
\ban d(\mu_r)= \min \lefto \{\frac{\sum p_i \log p_i}{\log r}, 1\right \} , \label{eq:formula} \ean
for all $r\in I\! \setminus\! E$, where $E$ is a set of Hausdorff and packing dimension $0$.
\end{theorem}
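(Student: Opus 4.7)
The approach is to derive the statement as a direct corollary of Hochman's Theorem~1.8 from \cite{Hoc12}, combined with the fact that self-similar measures are exact-dimensional. Hochman's theorem applies to one-parameter families of real-valued IFSs of the form $\varphi_{i,r}(x)=rx+w_i$ with common contraction ratio $r\in I$ and shows that for every $r\in I\setminus E$, where $E$ has Hausdorff and packing dimension zero, the IFS satisfies an exponential separation condition (essentially ruling out exact overlaps in an effective quantitative sense). Under exponential separation, the local, and hence Hausdorff, dimension of the associated self-similar measure $\mu_r$ equals the similarity dimension $(\sum p_i\log p_i)/\log r$, truncated at $1$ to account for the ambient dimension.

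The remaining step is to reconcile the notion of dimension produced by Hochman's theorem with the information dimension $d(\mu_r)$ in Definition~\ref{def:infdim}. For this I would invoke the well-known fact that self-similar measures on $\mathbb R$ are exact-dimensional: the local dimension exists and is constant $\mu_r$-a.s. Standard arguments relating the entropy of dyadic-scale quantizations $\langle X\rangle_k$ to local dimensions (via a Young-type theorem) then imply that, for a compactly supported exact-dimensional measure, $\ldim(\mu_r)$ and $\udim(\mu_r)$ both exist and coincide with this common local dimension. Combined with Hochman's formula, this yields \eqref{eq:formula}.

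The principal difficulty, of course, lies entirely within Hochman's deep result, namely controlling the smallness of the exceptional parameter set $E$; this is precisely the content of \cite[Thm.~1.8]{Hoc12} and is taken as a black box. Once it is granted, what remains is largely bookkeeping: checking that the IFSs $\Phi_r$ introduced in Section~\ref{sec:ifsinf} fit Hochman's framework (which they do, being one-parameter families of affine contractions on $\mathbb R$ with common ratio $r$ and a fixed translation set $\mathcal W$ independent of $r$), and invoking exact-dimensionality to bridge the gap between Hochman's notion of dimension and the quantization-based information dimension of Definition~\ref{def:infdim}.
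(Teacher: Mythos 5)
Your overall strategy coincides with the paper's: Theorem~\ref{thm:hochman} is obtained by specializing \cite[Thm.~1.8]{Hoc12} to the family $\{\Phi_r\}_{r\in I}$. But you have skipped the one step that carries all of the mathematical content of the paper's proof: verifying the \emph{non-degeneracy hypothesis} of Hochman's theorem. That theorem does not apply to every analytic one-parameter family of affine contractions; it requires that for all distinct symbolic words $\mathbf i\neq\mathbf j$ the separation function $\Delta_{\mathbf i,\mathbf j}(r)=\varphi_{\mathbf i,r}(0)-\varphi_{\mathbf j,r}(0)$ not vanish identically on $I$ (otherwise one has exact overlap for \emph{every} parameter and the dimension formula genuinely fails, e.g.\ if two translations $w_i$ coincided). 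Your parenthetical ``which they do, being one-parameter families of affine contractions on $\mathbb R$ with common ratio $r$ and a fixed translation set $\mathcal W$'' asserts membership in the framework but never states, let alone checks, this hypothesis. The paper's proof consists precisely of this check: one computes $\Delta_{\mathbf i,\mathbf j}(r)=\sum_{k\geqslant 1} r^{k-1}(w_{i_k}-w_{j_k})$ and observes that a power series vanishing on a non-empty interval is identically zero, so that $\Delta_{\mathbf i,\mathbf j}\equiv 0$ on $I$ forces $w_{i_k}=w_{j_k}$ for all $k$, i.e.\ $\mathbf i=\mathbf j$, using the pairwise distinctness of the $w_i$. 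This is a short argument, but without it your proof reduces to restating the conclusion of \cite[Thm.~1.8]{Hoc12}.

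On the positive side, you make explicit a step the paper leaves implicit: Hochman's theorem is stated for the (exact/Hausdorff) dimension of $\mu_r$, whereas Theorem~\ref{thm:hochman} is about the quantization-based information dimension of Definition~\ref{def:infdim}. Your appeal to exact-dimensionality of self-similar measures and to the standard equivalence, for compactly supported exact-dimensional measures, between local dimension and $\lim_k H(\lk X\rk_k)/\log k$ is exactly the right way to close that gap, and is a point the paper's one-line ``the result now follows'' glosses over. So: add the power-series non-degeneracy argument and your proof is complete and, if anything, slightly more careful than the paper's on the dimension-matching side.
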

\begin{proof}
For $\mathbf i\in \{1,...,n\}^k$, let $\varphi_{\mathbf i,r}:=\varphi_{i_1,r}\circ \ldots \circ \varphi_{i_k,r}$ and define 
\ba \Delta_{\mathbf i,\mathbf j}(r) := \varphi_{\mathbf i, r}(0) - \varphi_{\mathbf j, r}(0),  \ea
for  $\mathbf i,\mathbf j\in \{1,...,n\}^k$. Extend this definition to infinite sequences $\mathbf i,\mathbf j\in \{1,...,n\}^\mathbb N$ according to
\ba \Delta_{\mathbf i,\mathbf j}(r) := \lim_{k\to\infty}\Delta_{(i_1,...,i_k),(j_1,...,j_k)}(r).  \ea
Using  \eqref{eq:ifs} it follows that 
\ba \Delta_{\mathbf i,\mathbf j}(r) =\sum_{k=1}^\infty r^{k-1} (w_{i_k}-w_{j_k}). \ea
 Since the $w_i$ are pairwise different and a power series can  vanish on a non-empty interval only if it is identically zero, we get that $\Delta_{\mathbf i,\mathbf j}\equiv 0$ on a compact interval $I$ if and only if $\mathbf i=\mathbf j$. The result now follows by application of \cite[Thm.~1.8]{Hoc12}.
\end{proof}

\begin{remark}
Note that   \eqref{eq:formula} can be rewritten in terms of the entropy of the random variable $W$, defined implicitly in \eqref{eq:representation}, which takes value $w_i$ with probability $p_i$:
\ban d(\mu_r)= \min \lefto \{\frac{H(W)}{\log (1/r)}, 1\right \} .\label{eq:formula2}\ean
\end{remark}


We wish to construct self-similar input distributions that yield $\dof(X_1,...,X_K;\mathbf H)=K/2$ for all channel matrices satisfying Condition~($*$). To this end, we first note that choosing all inputs $X_i$ to have self-similar distributions as in \eqref{eq:representation} with identical contraction parameter $r$, the distributions of the random variables $\sum  h_{ij} X_j $ appearing in \eqref{eq:defdof} are again self-similar. This  allows us to compute the information dimension terms in \eqref{eq:defdof} using \eqref{eq:formula2}. The  freedom we exploit in constructing full DoF-achieving $X_i$ lies in the choice of the set $\mathcal W$, $(p_1,...,p_n)$ is assumed uniform on $\mathcal W$, for simplicity of exposition. Specifically, we want to ensure that  the first term inside the sum \eqref{eq:defdof} is twice as big as the second term, for all $i$. This means that the sum of the desired signal $h_{ii}X_i$ and the interference $\sum_{j\neq i}  h_{ij} X_j$ should be  twice as ``rich'' as the interference term alone. 
From \eqref{eq:formula2} it follows that this doubling is accomplished if
\ban 
\Bigg | \sum_{j=1}^Kh_{ij}\mathcal W \Bigg | &\approx \Bigg |\sum_{j\neq i}^Kh_{ij}\mathcal W \Bigg | ^2. \label{eq:key}\ean
Let us now turn to the main idea for realizing \eqref{eq:key}.  We build $\mathcal W$ as a set of $\mathbb Z$-linear combinations of monomial expressions in the off-diagonal channel coefficients, an idea that was introduced in \cite{MGMK09}. 
Multiplying the elements of the so-constructed set $\mathcal W$ by an off-diagonal channel coefficient simply increases the degrees of the involved monomials by $1$ so that the algebraic structure of $h_{ij}\mathcal W$, for $i\neq j$, is the same as that of $\mathcal W$. In addition,  we let the degrees of the involved monomials be large and say that multiplication of $\mathcal W$ by $h_{ij}$, $i\neq j$, results in a set that is ``similar'' to $\mathcal W$, denoted as $h_{ij}\mathcal W\approx \mathcal  W$.
Finally, we also take $|\mathcal W|$ to be large so that $\sum_{j\neq i}h_{ij}\mathcal W\approx \mathcal W$,   see Fig.~\ref{fig:bla}~(a).


As $h_{ii}$ does not participate in the monomial expressions in $\mathcal W$, the set $h_{ii}\mathcal W$ is unlikely to have the same algebraic structure as $\mathcal W$. In fact, Condition ($*$) guarantees that this does not happen. We can then  conclude that the set $ h_{ii}\mathcal W+ \sum_{j\neq i}h_{ij}\mathcal W \approx h_{ii}\mathcal W +\mathcal W$ roughly has cardinality $\left |\mathcal W \right |^2$, as desired, see Fig.~\ref{fig:bla}~(b). Moreover, Condition ($*$) guarantees that every  element of $\mathcal W$ has exactly one representation as a $\mathbb Z$-linear combination of the monomials in the off-diagonal channel coefficients. This lets us control the cardinality of $\mathcal W$ by controlling the number of  possible $\mathbb Z$-linear combinations.

\begin{figure}
\begin{subfigure}{.5\textwidth}
\begin{center}
\begin{tikzpicture}[scale=.7]
\begin{scope}
\draw (0,0) circle (2pt);
\node at (-.18,-.3) {$0$};
\draw (1,0) circle (2pt);
\draw (.5,.5) circle (2pt);
\draw (.5,-.5) circle (2pt);
\draw (-.5,.5) circle (2pt);
\draw (-.5,-.5) circle (2pt);
\draw (-1,0) circle (2pt);
\end{scope}
\node at (2,0) {$+$};
\begin{scope}[xshift=4cm]
\draw (0,0) circle (2pt);
\node at (-.18,-.3) {$0$};
\draw (1,0) circle (2pt);
\draw (.5,.5) circle (2pt);
\draw (.5,-.5) circle (2pt);
\draw (-.5,.5) circle (2pt);
\draw (-.5,-.5) circle (2pt);
\draw (-1,0) circle (2pt);
\end{scope}
\node at (6,0) {$=$};
\begin{scope}[xshift=9cm]
\draw (0,0) circle (2pt);
\node at (-.18,-.3) {$0$};
\draw (1,0) circle (2pt);
\draw (.5,.5) circle (2pt);
\draw (.5,-.5) circle (2pt);
\draw (-.5,.5) circle (2pt);
\draw (-.5,-.5) circle (2pt);
\draw (-1,0) circle (2pt);
\draw (2,0) circle (2pt);
\draw (1.5,-.5) circle (2pt);
\draw (1,-1) circle (2pt);
\draw (0,-1) circle (2pt);
\draw (-1,-1) circle (2pt);
\draw (-1.5,-.5) circle (2pt);
\draw (-2,0) circle (2pt);
\draw (-1.5,.5) circle (2pt);
\draw (-1,1) circle (2pt);
\draw (0,1) circle (2pt);
\draw (1,1) circle (2pt);
\draw (1.5,.5) circle (2pt);
\end{scope}
\end{tikzpicture}
\end{center}

\caption{Sum of two sets with common algebraic structure.}
\label{fig:bla1}

\end{subfigure}
\vspace{.2cm}

\begin{subfigure}{.5\textwidth}
\begin{center}
\begin{tikzpicture}[scale=.7]
\begin{scope}
\draw (0,0) circle (2pt);
\node at (-.18,-.3) {$0$};
\draw (.8,0) circle (2pt);
\draw (.4,.4) circle (2pt);
\draw (.4,-.4) circle (2pt);
\draw (-.4,.4) circle (2pt);
\draw (-.4,-.4) circle (2pt);
\draw (-.8,0) circle (2pt);
\end{scope}
\node at (2,0) {$+$};
\begin{scope}[xshift=4cm]
\draw (0,0) circle (2pt);
\node at (-.18,-.3) {$0$};
\draw (1,0) circle (2pt);
\draw (.5,.5) circle (2pt);
\draw (.5,-.5) circle (2pt);
\draw (-.5,.5) circle (2pt);
\draw (-.5,-.5) circle (2pt);
\draw (-1,0) circle (2pt);
\end{scope}
\node at (6,0) {$=$};
\begin{scope}[xshift=9cm]
\draw (0,0) circle (2pt);
\node at (-.18,-.3) {$0$};
\draw (1,0) circle (2pt);
\draw (.5,.5) circle (2pt);
\draw (.5,-.5) circle (2pt);
\draw (-.5,.5) circle (2pt);
\draw (-.5,-.5) circle (2pt);
\draw (-1,0) circle (2pt);

\draw (.8,0) circle (2pt);
\draw (1.8,0) circle (2pt);
\draw (1.3,.5) circle (2pt);
\draw (1.3,-.5) circle (2pt);
\draw (.3,.5) circle (2pt);
\draw (.3,-.5) circle (2pt);
\draw (-.2,0) circle (2pt);

\draw (-.8,0) circle (2pt);
\draw (.2,0) circle (2pt);
\draw (-.3,.5) circle (2pt);
\draw (-.3,-.5) circle (2pt);
\draw (-1.3,.5) circle (2pt);
\draw (-1.3,-.5) circle (2pt);
\draw (-1.8,0) circle (2pt);

\draw (.4,.4) circle (2pt);
\draw (1.4,.4) circle (2pt);
\draw (.9,.9) circle (2pt);
\draw (.9,-.1) circle (2pt);
\draw (-.1,.9) circle (2pt);
\draw (-.1,-.1) circle (2pt);
\draw (-.6,0.4) circle (2pt);

\draw (.4,-.4) circle (2pt);
\draw (1.4,-.4) circle (2pt);
\draw (.9,.1) circle (2pt);
\draw (.9,-.9) circle (2pt);
\draw (-.1,.1) circle (2pt);
\draw (-.1,-.9) circle (2pt);
\draw (-.6,-.4) circle (2pt);

\draw (-.4,.4) circle (2pt);
\draw (.6,.4) circle (2pt);
\draw (.1,.9) circle (2pt);
\draw (.1,-.1) circle (2pt);
\draw (-.9,.9) circle (2pt);
\draw (-.9,-.1) circle (2pt);
\draw (-1.4,.4) circle (2pt);

\draw (-.4,-.4) circle (2pt);
\draw (.6,-.4) circle (2pt);
\draw (.1,.1) circle (2pt);
\draw (.1,-.9) circle (2pt);
\draw (-.9,.1) circle (2pt);
\draw (-.9,-.9) circle (2pt);
\draw (-1.4,-.4) circle (2pt);

\end{scope}
\end{tikzpicture}
\end{center}

\caption{Sum of two sets with different algebraic structures.}
\label{fig:bla2}
\end{subfigure}
\caption{\small The cardinality of the sum in  (a) is $19$ and hence  small compared to the $7^2=49$  pairs  summed up, whereas the sum in  (b) has cardinality $49$.}
\label{fig:bla}
\end{figure}
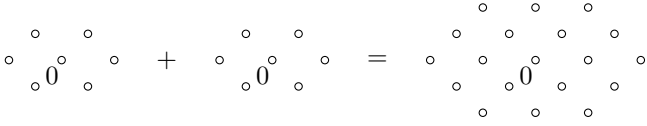
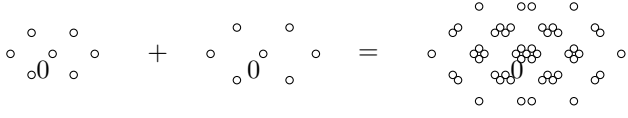

\addtolength{\textheight}{-1cm}

\section{Proof of Theorem~\ref{thm:explicit} }
\label{sec:proof}

We first note that Condition ($*$)  implies that $\mathbf H$ cannot contain zeros, i.e., $\mathbf H$ must be fully connected.
The upper bound $\Dof(\mathbf H)\leqslant K/2$ therefore follows from  \cite[Prop.~1]{HMN05}.

The remainder of the proof establishes the lower bound  $\Dof(\mathbf H)\geqslant K/2$.
Let $N$ and $d$ be positive integers. We begin by setting
\ban \mathcal W_N:=  \Bigg \{ \sum_{i=1}^{\varphi(d)} a_i f_i(\mathbf  {\check h}) \; : \; a_1, ..., a_{\varphi(d)}\in\{1,..., N\}  \Bigg \} \label{eq:in} \ean
and $r_N:=|\mathcal W_N|^{-2}$. Let $\{W_{i,k} \; : \; 1\leqslant i\leqslant K, k\geqslant 0\}$ be i.i.d.\ uniform random variables on $\mathcal W_N$. For $\varepsilon>0$, we set $I_\varepsilon:=[r_N-\varepsilon, r_N]$ and consider the inputs 
\ba X_i=\sum_{k=0}^\infty {r}^kW_{i,k}, \quad i=1,...,K,\ea
where $r\in I_\varepsilon$. Then, the signals 
\ba 		\sum_{j=1}^Kh_{ij}X_j &=\sum_{k=0}^\infty r^k \sum_{j=1}^Kh_{ij}W_{j,k} \\ \text{and} \quad \sum_{j\neq i}^Kh_{ij}X_j&=\sum_{k=0}^\infty r^k \sum_{j\neq i}^Kh_{ij}W_{j,k}\ea
also have self-similar distributions with alphabets $ \sum_{j=1}^Kh_{ij}\mathcal W_N $ and $\sum_{j\neq i}h_{ij}\mathcal W_N$, respectively.
Applying Theorem~\ref{thm:hochman}, we can therefore conclude the existence of an  $\tilde r\in I_\varepsilon$ such that for $X_i=\sum_{k=0}^\infty {\tilde r}^kW_{i,k}$,  $i=1,...,K$, we have
\ba 	d\Bigg (\sum_{j=1}^Kh_{ij}X_j\Bigg )&= \min \lefto \{\frac{H\lefto (\sum_{j=1}^Kh_{ij}W_{j,0}\right )}{\log (1/ \tilde r)}, 1\right \}	\\  \text{and} \quad d\Bigg (\sum_{j\neq i}^Kh_{ij}X_j\Bigg)&=\min \lefto \{\frac{H\lefto (\sum_{j\neq i}^K h_{ij}W_{j,0}\right )}{\log (1/ \tilde r)}, 1\right \}	,	\ea
for $i=1,...,K$. Taking $\varepsilon \to 0$, we get $\tilde r \to r_N$, and  by the continuity of $\log (\cdot )$ and  \eqref{eq:ineq1}   it follows that 
\ban \sum_{i=1}^K  &\left [  \min \lefto \{\frac{H\lefto (\sum_{j=1}^Kh_{ij}W_{j,0}\right )}{\log (1/  r_N)}, 1\right \} \right .   \nonumber  \\ & \left .- \min \lefto \{\frac{H\lefto (\sum_{j\neq i}^K h_{ij}W_{j,0}\right )}{\log (1/  r_N)}, 1\right \}\right ] \leqslant \Dof(\mathbf H).\label{eq:3} \ean
Note that the random variable $\sum_{j\neq i} h_{ij}W_{j,0}$ takes value in 
\ban \Bigg\{\! \sum_{i=1}^{\varphi(d+1)}a_i f_i(\mathbf  {\check h}) \; : \; a_1, ..., a_{\varphi(d+1)}\in\{1,..., (K-1)N\}  \Bigg \} . \label{eq:pres} \ean
By Condition ($*$) it follows that each element in the set \eqref{eq:pres} has exactly one representation as a $\mathbb Z$-linear combination with coefficients $a_1, ..., a_{\varphi(d+1)}\in\{1,..., (K-1)N\} $. This allows us to conclude that the cardinality of the set \eqref{eq:pres} is given by $((K-1)N)^{\varphi(d+1)}$, which implies $H\lefto (\sum_{j\neq i} h_{ij}W_{j,0}\right )\leqslant {\varphi(d+1)} \log((K-1)N) $.
With $\log (1/r_N)=2 \log |\mathcal W_N|=2 {\varphi(d)} \log N  $, we therefore get
\ban \frac{H\lefto (\sum_{j\neq i}^K h_{ij}W_{j,0}\right )}{\log (1/  r_N)} \leqslant \frac{{\varphi(d+1)}\log((K-1)N) }{2{\varphi(d)} \log N} \xrightarrow{d,N\to \infty} \frac{1}{2}, \label{eq:2} \ean 
where we used \ba \frac{\varphi(d+1)}{\varphi(d)}=\frac{K(K-1)+d+1}{d+1}\xrightarrow{d\to \infty}1.\ea
Next, note that Condition ($*$) implies that the sum $h_{ii} W_{i,0} + \sum_{j\neq i}h_{ij}W_{j,0}$ can be  separated uniquely into the terms $h_{ii} W_{i,0} $ and $\sum_{j\neq i}h_{ij}W_{j,0}$ in the sense that the pair $( h_{ii} W_{i,0} , \sum_{j\neq i}h_{ij}W_{j,0})$ and the sum $h_{ii} W_{i,0} + \sum_{j\neq i}h_{ij}W_{j,0}$ are related through a bijection.
It therefore follows   that the pair and the sum have equal entropy \cite[Ex.~2.4]{CT06}, and since the $W_{j,0}$, $1\leqslant j \leqslant K$, are independent, we find that
\ban \! H\Bigg (\sum_{j=1}^Kh_{ij}W_{j,0}\Bigg )&= H \Bigg (h_{ii} W_{i,0}, \sum_{j\neq i}h_{ij}W_{j,0}	\Bigg )\\ &= H \lefto (h_{ii} W_{i,0} \right ) + H \Bigg ( \sum_{j\neq i}h_{ij}W_{j,0}	\label{eq:decomp}\Bigg ).\ean 
%
Putting the pieces together, we obtain
\ban  &\frac{H\lefto (\sum_{j=1}^Kh_{ij}W_{j,0}\right ) - H\lefto (\sum_{j\neq i}^K h_{ij}W_{j,0}\right ) }{\log (1/  r_N)}\\ &= \frac{H(h_{ii}W_{i,0})}{2{\varphi(d)}\log N}= \frac{{\varphi(d)}\log N}{2{\varphi(d)}\log N} =\frac{1}{2},\label{eq:1} \ean 
where we used the scaling invariance of entropy, the fact that $W_{i,0}$ is uniform on $\mathcal W$, and $|\mathcal W|=N^{\varphi(d)}$. Finally, \eqref{eq:1} and \eqref{eq:2} imply that the left-hand side of \eqref{eq:3} tends to $\sum_{i=1}^K(1-\frac{1}{2})=K/2$  for $d,N\to \infty$. This completes the proof.
\endproof



\section{Condition ($*$) is not necessary}

While Condition ($*$) is  sufficient  for $\Dof(\mathbf H)=K/2$, we next show that it is not necessary. This will be accomplished by constructing a class of example channels  which fail to satisfy Condition ($*$) but still admit $K/2$ DoF.  As almost every channel matrix  satisfies Condition ($*$) this example class necessarily has Lebesgue measure zero.
Specifically, we consider channel matrices that have $h_{ii} \in \mathbb R\! \setminus\! \mathbb Q$, $i=1,...,K$, and $h_{ij}\in \mathbb Q\! \setminus\! \{0\}$, for $i,j=1,...,K$ with  $i\neq j$. Since this means, in particular,  that all channel coefficients are nonzero, $\mathbf H$ is fully connected. The upper bound $\Dof(\mathbf H)\leqslant K/2$ is therefore again implied by  \cite[Prop.~1]{HMN05}. Moreover, as two rational numbers are linearly dependent over $\mathbb Q$, these channel matrices do not satisfy Condition ($*$). 
We next show that this example class nevertheless has $\Dof(\mathbf H)\geqslant K/2$ and hence $\Dof(\mathbf H)= K/2$. This will be accomplished by constructing corresponding asymptotically DoF-optimal input distributions. First, we argue that we may assume $h_{ij}\in\mathbb Z$, for $i\neq j$. 
This follows from the fact that multiplying the channel coefficients $h_{ij}$, $i,j=1,...,K$, by a common denominator of the $h_{ij}$, $i\neq j$, is equivalent to scaling of the inputs $X_i$, which does not impact DoF.

 Let 
\ba \mathcal W:= \{ 0,..., N-1\} ,\ea 
for some $N>0$, and let $\{W_{i,k} \; : \; 1\leqslant i\leqslant K, k\geqslant 0\}$  be i.i.d.\ uniform on $\mathcal W$. We set the contraction parameter to
\ban r= 2^{-2\log(2h_\text{max} KN)}, \label{eq:r} \ean where $h_\text{max}:=\max \{ |h_{ij}| \, : \, i\neq j\}$. Since $\sum_{j\neq i} h_{ij}W_{j,0}$ is integer-valued and $h_{ii}W_{i,0}$ is irrational, the sum $\sum_{j=1}^K h_{ij}W_{j,0}$ can  uniquely be separated into the terms $h_{ii}W_{i,0}$ and $\sum_{j\neq i} h_{ij}W_{j,0}$ in the sense of  \eqref{eq:decomp}. 
As $h_{ij}\in\mathbb Z$ for $i\neq j$, we have 
\ba \sum_{j\neq i}^K h_{ij}W_{j,0} \in \{ -h_\text{max} (K-1) N, ... ,0 , ... ,h_\text{max}(K-1)N\} \ea
and hence $ H\lefto (\sum_{j\neq i} h_{ij}W_{j,0}\right ) \leqslant \log \lefto (2h_\text{max}KN \right )$. 
Since the $W_{j,0}$, $ 1\leqslant j \leqslant K$, are i.i.d., we furthermore have $H(h_{ii}W_{i,0})\leqslant H (\sum_{j\neq i} h_{ij}W_{j,0} )$  \cite[Ex.~2.14]{CT06} and  \eqref{eq:decomp}  implies that
\ba H \Bigg (\sum_{j=1}^K h_{ij}W_{j,0}\Bigg )
\leqslant 2 H\Bigg(\sum_{j\neq i}^K h_{ij}W_{j,0}\Bigg )\leqslant  2\log \lefto (2h_\text{max}KN \right ). \ea
With \eqref{eq:r} we therefore obtain \ba \min \lefto \{ \frac{H\lefto (\sum_{j=1}^K h_{ij}W_{j,0}\right )}{\log (1/r)}, 1\right \}= \frac{H\lefto (\sum_{j=1}^K h_{ij}W_{j,0}\right )}{\log (1/r)}. \ea Application of Theorem~\ref{thm:hochman} now yields that for each $\varepsilon>0$, there is an $\tilde r\in I_\varepsilon:=[r-\varepsilon, r]$ such that  the inputs $X_i=\sum_{k=0}^\infty {\tilde r}^kW_{i,k}$,  for $i=1,...,K$, satisfy
\ba 	d\Bigg (\sum_{j=1}^Kh_{ij}X_j\Bigg )\! - d\Bigg (\sum_{j\neq i}^Kh_{ij}X_j\Bigg )&\! =\! \frac{H(h_{ii}W_{i,0})}{\log (1/\tilde r)}\! =\!\frac{\log N}{\log (1/\tilde r)}	,	\ea
where we used \eqref{eq:decomp}. As $\varepsilon$ can be made arbitrarily small, we find that
\ban 	\Dof(\mathbf H)\geqslant \frac{K\log N}{\log (1/r)}=\frac{K\log N}{2\log \lefto (2h_\text{max}KN \right )} .	\label{eq:lhs} \ean
Letting $N\to \infty$, the right-hand side of \eqref{eq:lhs} approaches $K/2$, which completes the proof.

We conclude by noting that the example class studied here was investigated before in \cite[Thm.~1]{EO09} and \cite[Thm.~6]{WSV13}. In contrast to \cite{EO09, WSV13} our proof of DoF-optimality  is, however, not based on arguments from Diophantine approximation theory. 

\bibliographystyle{IEEEtran}
\bibliography{IEEEabrv,refs}
\end{document}